\newtheorem{lem}{Lemma}[section]
\newtheorem{theorem}{Theorem}[section]
\begin{document}
\title{Can Iterative Decoding for Erasure Correlated Sources be Universal?}
\author{Arvind Yedla, Henry D.~Pfister, Krishna R.~Narayanan\thanks{This work was supported in part by the National Science Foundation under Grant No. CCR-0515296 and by the Qatar National Research Foundation under its National Research Priorities Program.}\\
{\normalsize Department of Electrical and Computer Engineering, Texas A\&M University}\\
{\normalsize College Station, TX 77840, U.S.A}\\
{\normalsize \{yarvind,hpfister,krn\}@tamu.edu}}

\maketitle
\thispagestyle{empty}
\pagestyle{empty}

  \begin{abstract}
    In this paper, we consider a few iterative decoding schemes for the
    joint source-channel coding of correlated sources. Specifically, we
    consider the joint source-channel coding of two erasure correlated
    sources with transmission over different erasure channels. Our main
    interest is in determining whether or not various code ensembles can
    achieve the capacity region universally over varying channel
    conditions. We consider two ensembles in the class of
    low-density generator-matrix (LDGM) codes known as Luby-Transform
    (LT) codes and one ensemble of low-density parity-check (LDPC) codes.
    We analyze them using density evolution and show that
    optimized LT codes can achieve the extremal symmetric point of the
    capacity region. We also show that LT codes are not universal under
    iterative decoding for this problem because they cannot
    simultaneously achieve the extremal symmetric point and a corner
    point of the capacity region. The sub-universality of iterative
    decoding is characterized by studying the density evolution for LT
    codes.
  \end{abstract}

  \section{Introduction}
  \label{intro}
  The system model considered in this paper is shown in Figure~\ref{fig:sys_model}. We wish to transmit the
  outputs of two discrete memoryless correlated sources $\left(U_i^{(1)},U_i^{(2)}\right)$, for
  $i=1,2,\cdots,k$ to a central receiver through two independent discrete memoryless channels with capacities
  $C_1$ and $C_2$, respectively. We will assume that each channel can be parameterized by a single parameter $\epsilon_i$ for $i=1,2$
  (e.g., the erasure probability or crossover probability).
  The two sources are not allowed to collaborate and, hence, they use two independent
  encoding functions which map the $k$ input symbols in to $n_1$ and $n_2$ output symbols, respectively. The rates of
  the encoders are given by $R_1 = k/n_1$ and $R_2 = k/n_2$.

  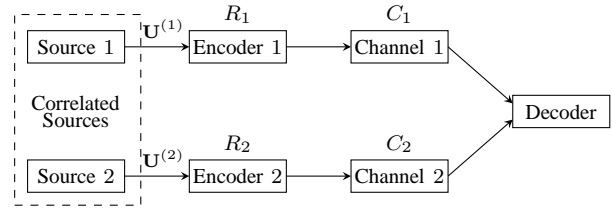
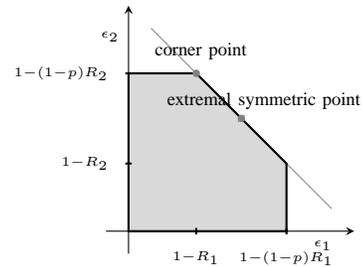
\begin{figure}[!h]
    \centering 
    \subfigure[System Model]{
\begin{tikzpicture}[scale=0.43,>=stealth,xshift=-1cm]
\draw (0,0) rectangle +(3,1);
\draw (1.5,.5) node {\footnotesize Source $2$};
\draw (0,4) rectangle +(3,1);
\draw (1.5,4.5) node {\footnotesize Source $1$};
\draw (5,0) rectangle +(3,1);
\draw (6.5,.5) node {\footnotesize Encoder $2$};
\draw (6.5,1.5) node {\footnotesize $R_2$};
\draw (5,4) rectangle +(3,1);
\draw (6.5,4.5) node {\footnotesize Encoder $1$};
\draw (6.5,5.5) node {\footnotesize $R_1$};
\draw (10,0) rectangle +(3,1);
\draw (11.5,.5) node {\footnotesize Channel $2$};
\draw (11.5,1.5) node {\footnotesize $C_2$};
\draw (10,4) rectangle +(3,1);
\draw (11.5,4.5) node {\footnotesize Channel $1$};
\draw (11.5,5.5) node {\footnotesize $C_1$};
\draw (15,2) rectangle +(3,1);
\draw (16.5,2.5) node {\footnotesize Decoder};

\draw[->] (3,0.5) -- (5,0.5) node[pos=0.6,above] {$\scriptstyle{\mathbf{U}^{(2)}}$};
\draw[->] (3,4.5) -- (5,4.5) node[pos=0.6,above] {$\scriptstyle{\mathbf{U}^{(1)}}$};
\draw[->] (8,0.5) -- (10,0.5);
\draw[->] (8,4.5) -- (10,4.5);
\draw[->] (13,0.5) -- (15,2.25);
\draw[->] (13,4.5) -- (15,2.75);
\draw[dashed] (-0.4,-0.4) rectangle +(3.9,5.9);
\draw (1.5,2.75) node {\footnotesize Correlated};
\draw (1.5,2.25) node {\footnotesize Sources};
\end{tikzpicture}

      \label{fig:sys_model}
    }
    \subfigure[The achievable region for the $2$ user case]{
\begin{tikzpicture}[>=stealth,scale=0.6]
\draw[gray, very thin] (0.5,4.5) -- (4.5,0.5); 
\draw[->] (-0.5,0) -- (5,0) node[very near end,sloped,below] {$\scriptscriptstyle\epsilon_1$};
\draw[->] (0,-0.5) -- (0,5) node[very near end,left] {$\scriptscriptstyle\epsilon_2$};
\shade[top color=gray!30!white, bottom color=gray!30!white] (0,0) -- (0,3.5) -- (1.5,3.5) -- (3.5,1.5) -- (3.5,0) -- cycle;
\draw[thick] (0,0) -- (0,3.5) -- (1.5,3.5) -- (3.5,1.5) -- (3.5,0) -- cycle;
\draw[gray, very thin] (2.425,2.425) -- (2.575,2.575);
\draw[thick] (3.5cm,-2pt) -- (3.5cm,2pt) node[below=5pt] {$\scriptscriptstyle 1-(1-p)R_1$};
\draw[thick] (1.5cm,-2pt) -- (1.5cm,2pt) node[below=5pt] {$\scriptscriptstyle{1-R_1}$};
\draw[thick] (-2pt,3.5cm) -- (2pt,3.5cm) node[left=5pt] {$\scriptscriptstyle 1-(1-p)R_2$};
\draw[thick] (-2pt,1.5cm) -- (2pt,1.5cm) node[left=5pt] {$\scriptscriptstyle{1-R_2}$};
 \draw (1.6,3.6) node[above] {\scriptsize corner point};
 \filldraw[gray] (1.5,3.5) circle (2pt);
 \draw (3,2.5) node[above] {\scriptsize extremal symmetric point};
 \filldraw[gray] (2.5,2.5) circle (2pt);
\end{tikzpicture}

      \label{fig:roc_center}
    }
    \caption{System Model}
  \end{figure}

  In such a problem,
  it is clear that one has to take advantage of the correlation between the sources to reduce the required bandwidth to transmit
  the information to the central receiver. Thus, this joint source-channel coding problem can be seen to be an instance of Slepian-Wolf
  coding \cite{Slepian-it73} in the presence of a noisy channel.
  If $\epsilon_1$ and $\epsilon_2$ are known to transmitter 1 and 2 respectively, then
the sources can be reliably decoded at the receiver iff
   \begin{equation}
    \label{eq:sw}
    \begin{split}
      \frac{C_1 (\epsilon_1)}{R_1} &\geq \textnormal{H}\left(U^{(1)}|U^{(2)}\right) \\
      \frac{C_2 (\epsilon_2)}{R_2} &\geq \textnormal{H}\left(U^{(2)}|U^{(1)}\right) \\
      \frac{C_1 (\epsilon_1)}{R_1}+\frac{C_2 (\epsilon_2)}{R_2}&\geq \textnormal{H}\left(U^{(1)},U^{(2)}\right).
    \end{split}
  \end{equation}
  In this case, one can separate the problem into Slepian-Wolf coding \cite{Slepian-it73} of the two sources
and channel coding for the two channels.
In recent years, there have been graph based coding schemes which, under iterative decoding, can
  obtain near optimal performance for this problem \cite{garciafrias2001jsc,zhong2005lcc,liveris2002jsc, liveris2002cbs}.

  However, in several practical situations, it is
  unrealistic for the transmitters to have a priori knowledge of $\epsilon_1$ and $\epsilon_2$. Therefore, we consider the case where the transmitters each use
  a single code of rate $R$ (though it is possible to extend this to different rates $R_1$ and $R_2$).
  We then wish to find a universal source-channel coding scheme such that reliable transmission
  is possible over a range of channel parameters $(\epsilon_1,\epsilon_2)$. Ideally, we would like to have one code of
  rate $R_1 = R_2 = R$ that allows error free communication of the sources for any set of channel parameters $(\epsilon_1,\epsilon_2)$ for
  which $\epsilon_1,\epsilon_2$ satisfy the conditions in (\ref{eq:sw}).
  For a given pair of encoding functions of rate $R$ and a joint decoding algorithm, the achievable channel parameter region (ACPR) is defined as the set of all channel parameters 
  $(\epsilon_1,\epsilon_2)$ for which the
  encoder/decoder combination achieves an arbitrarily low probability of error as $k \rightarrow \infty$. For some channels, this region is equal to the entire region in (\ref{eq:sw}) and, in this case, we call it
  the capacity region. Note that the ACPR and the capacity region are defined as
  the set of all channel parameters for which successful recovery of the sources is possible for a fixed encoding
  rate pair $(R,R)$ (or, more generally $(R_1,R_2)$) instead of the set of rates $(R_1,R_2)$ for a fixed pair of channel parameters $(\epsilon_1,\epsilon_2)$.

  It can be seen that the capacity region is, in fact, given by all pairs of $(\epsilon_1,\epsilon_2)$ such that (\ref{eq:sw}) is satisfied.
  For binary-input memoryless symmetric channels, this region is achieved when both users encode with independent random linear codes and use maximum-likelihood (or typical set) decoding at the receiver. This means that random codes with ML decoding are universal for symmetric channels.
  That is, for a given $(R_1,R_2)$, a single encoder/decoder pair suffices to communicate the sources over all pairs of symmetric channels
  for which $(\epsilon_1,\epsilon_2)$ satisfy the conditions in
  (\ref{eq:sw}). Thus, one can obtain optimal performance even without knowledge of $(\epsilon_1,\epsilon_2)$ at the transmitter.
  We refer to such encoder/decoder pairs as being {\em universal}.

  While random codes with ML decoding are universally good, this scheme is clearly impractical due to its large complexity.
  Our primary interest in this paper is to investigate whether there exist graph based codes and iterative decoding algorithms that
  are also universal and to find good encoder/decoder pairs that result in large ACPRs.
  Several code ensembles, including Luby Transform (LT) codes and LDPC codes, have been
  shown to achieve capacity with iterative decoding on a single user erasure channel \cite{Luby-focs02,RU-2008}. However, the universality of these
  ensembles for more complicated scenarios has not been studied well in the literature. Hence, the question of whether one can
  design a single graph based code and a decoding algorithm capable of universal performance is a question that has not been answered
  in the literature. One of the main results in this paper is that iterative decoding of LT codes cannot be universal thus showing that
  ensembles that are good for single user channels do not necessarily perform well for the joint source-channel coding problem. \par

  Before we discuss the main results in this paper in Section~\ref{sec:summary-results}, we first introduce a specific instance of the
  problem described above which is simple and yet captures the difficulty of designing a universal joint source-channel coding scheme.


    \section{System Model for Erasure Correlation}
  \label{sec:sys-model}
  Consider the case where the source correlation and channels both have
  an erasure structure. Let $Z_i$, for $i=1,2,\cdots,k$, denoted by the
  column vector $\mathbf{Z}_k$, be a sequence of i.i.d. Bernoulli-$p$
  random variables.  The correlation between $\mathbf{U}^{(1)}$ and
  $\mathbf{U}^{(2)}$ is defined by
  \begin{equation*}
    \left(U^{(1)}_i,U^{(2)}_i\right) = \left\{ \begin{array}{l}
        \textnormal{i.i.d. Bernoulli $\frac{1}{2}$ r.v.s}, \textnormal{if } Z_i=0\\\\
        \textnormal{same Bernoulli $\frac{1}{2}$ r.v. $U_i$ }, \textnormal{if } Z_i=1
      \end{array} \right.
  \end{equation*}
  We consider transmission over erasure channels with erasure rates $\epsilon_1$
  and $\epsilon_2$. The Slepian-Wolf conditions are satisfied if
  \begin{equation*}
    \begin{split}
      (1-\epsilon_1) &\geq (1-p)R_1 \\
      (1-\epsilon_2) &\geq (1-p)R_2\\
      \frac{(1-\epsilon_1)}{R_1} +  \frac{(1-\epsilon_2)}{R_2} &\geq 2-p,
    \end{split}
  \end{equation*}
  and the achievable channel parameter region is shown in Fig. \ref{fig:roc_center}. \par
  The source sequences $\mathbf{U}^{(1)}$ and $\mathbf{U}^{(2)}$ are
  encoded using a pair of independent binary linear codes
  $\mathcal{C}_1[n,k]$ and $\mathcal{C}_2[n,k]$ chosen from the same
  code ensemble.  We consider the encoding and decoding of LDPC codes
  and LDGM codes separately.

  \subsection{LDGM codes}
  \label{sec:ldgm-codes}
  The source sequences are encoded using different LDGM codes chosen
  from the same ensemble, defined in terms of generator matrices
  $G^{(1)}$ and $G^{(2)}$. The encoded sequences denoted by
  $\mathbf{X}^{(1)}$ and $\mathbf{X}^{(2)}$ are given by
  \begin{equation*}
    \mathbf{X}^{(i)} =
    \begin{bmatrix}
      \mathbf{U}^{(i)} \\
      G^{(i)^T}\mathbf{U}^{(i)}
    \end{bmatrix}.
  \end{equation*}
  The source bits $\mathbf{U}^{(1)}$ and $\mathbf{U}^{(2)}$ are
  punctured and then transmitted through binary erasure channels (BECs)
  with erasure rates $\epsilon_1$ and $\epsilon_2$ respectively. The
  governing equations at the decoder are given by
  \begin{equation*}
    \begin{bmatrix}
      G^{(i)^T} & I_n
    \end{bmatrix}
    \mathbf{X}^{(i)} = \mathbf{0},\text{ for } i=1,2,
  \end{equation*}
  where $I_n$ is an $n\times n$ identity matrix. For simplicity
  of notation, we define $H^{(i)} = \begin{bmatrix} G^{(i)^T} & I_n
  \end{bmatrix}$ for $i=1,2$, for the case of LDGM codes.  Given a
  matrix $A$, and a suitable index set $\mathcal{I}$, let
  $A_{\mathcal{I}}$ ($A_{\mathcal{I}'}$) denote the sub-matrix of $A$,
  restricted to the columns (rows) indexed by $\mathcal{I}$. Let
  $\mathcal{P}$ denote the set of indices corresponding to the non-zero
  locations of $\mathbf{Z}_k$, and $Z$ be the diagonal matrix, whose
  diagonal is given by $\left[\mathbf{Z}_k, \mathbf{0}\right]$, where
  $\mathbf{0}$ denotes a vector of all zeros of appropriate length. The
  governing equation $H\mathbf{X} = \mathbf{0}$ at the joint decoder can
  therefore be written in terms of the stacked parity check matrix
  \begin{equation}
    \label{eq:2}
    H =
    \begin{bmatrix}
      H^{(1)} & 0 \\
      0 & H^{(2)} \\
      Z_{\mathcal{P}'} & Z_{\mathcal{P}'}
    \end{bmatrix},
  \end{equation}
  where $\mathbf{X} = \left[\mathbf{X}^{(1)},\,\mathbf{X}^{(2)}\right]$
  and $\left[\,\cdot\,,\,\cdot\,\right]$ denotes concatenation.

  \subsection{LDPC codes}
  \label{sec:ldpc-codes-1}
  The source sequences are encoded using LDPC codes, defined in terms
  of parity-check matrices $H^{(1)}$ and $H^{(2)}$. The encoded
  sequences, denoted  $\mathbf{X}^{(1)}$ and $\mathbf{X}^{(2)}$, are
  encoded using a punctured systematic encoder and transmitted through
  binary erasure channels (BECs) with erasure rates $\epsilon_1$ and
  $\epsilon_2$ respectively. The governing equations at the decoder
  are given by
  \begin{equation*}
    H^{(i)}\mathbf{X}^{(i)} = \mathbf{0},\text{ for } i=1,2.
  \end{equation*}
  For joint decoding, the governing equations (including the source
  correlation constraints), written in terms of the stacked parity
  check matrix defined in (\ref{eq:2}), are given by
  \begin{equation*}
    H\mathbf{X} = \mathbf{0},
  \end{equation*}
  where  $\mathbf{X} = \left[\mathbf{X}^{(1)},\,\mathbf{X}^{(2)}\right]$.


  \subsection{Maximum Likelihood Block Decoder}
  \label{sec:maxim-likel-block}
  In the case of an erasure channel, ML decoding of linear codes is
  equivalent to solving systems of linear equations, which can be
  performed using Gaussian elimination. Let $\mathcal{E}_1$,
  $\mathcal{E}_2$ $\left(\text{and }\bar{\mathcal{E}}_1,
    \bar{\mathcal{E}}_2\right)$ denote the index sets of erasures (and
  non-erasures) corresponding to the received vectors, and let
  $\mathcal{E} = \left[\mathcal{E}_1,\,k+n+\mathcal{E}_2\right]$,
  $\bar{\mathcal{E}} =
  \left[\bar{\mathcal{E}}_1,\,k+n+\bar{\mathcal{E}}_2\right]$.  Denote the
  received sequences by $\mathbf{Y}^{(1)}$ and $\mathbf{Y}^{(2)}$ with
  $\mathbf{Y} = \left[\mathbf{Y}^{(1)},\,\mathbf{Y}^{(2)}\right]$. For
  the binary case, the defining equation $H\mathbf{Y}=0$ simplifies to
  $H_{\mathcal{E}}\mathbf{Y}_{\mathcal{E}}=H_{\bar{\mathcal{E}}}\mathbf{Y}_{\bar{\mathcal{E}}}$,
  in this case. Block ML decoding will be successful iff
  $H_{\mathcal{E}}$ has full rank and the erasures can be recovered by
  inverting $H_{\mathcal{E}}$.

  \subsection{Example}
  \label{sec:example}
  For example, consider the case where $k=4$ and $n=3$ using the LDPC framework.
  Then, we can choose $\mathcal{C}_1$ and $\mathcal{C}_2$ be 
  $\left[7,4\right]$ Hamming codes. If $\mathbf{Z}_4
  = \begin{bmatrix} 1 & 0 & 0 & 1\end{bmatrix}^T$, then the stacked
  parity-check matrix is given by
  \begin{equation*}
    H =
    \left[\begin{array}{ccccccc|ccccccc}
      1 & 1 & 0 & 1 & 1 & 0 & 0 & 0 & 0 & 0 & 0 & 0 & 0 & 0 \\
      1 & 0 & 1 & 1 & 0 & 1 & 0 & 0 & 0 & 0 & 0 & 0 & 0 & 0 \\
      0 & 1 & 1 & 1 & 0 & 0 & 1 & 0 & 0 & 0 & 0 & 0 & 0 & 0 \\
      \hline 0 & 0 & 0 & 0 & 0 & 0 & 0 & 1 & 1 & 0 & 1 & 1 & 0 & 0 \\
      0 & 0 & 0 & 0 & 0 & 0 & 0 & 1 & 0 & 1 & 1 & 0 & 1 & 0 \\
      0 & 0 & 0 & 0 & 0 & 0 & 0 & 0 & 1 & 1 & 1 & 0 & 0 & 1 \\
      \hline 1 & 0 & 0 & 0 & 0 & 0 & 0 & 1 & 0 & 0 & 0 & 0 & 0 & 0 \\
      0 & 0 & 0 & 1 & 0 & 0 & 0 & 0 & 0 & 0 & 1 & 0 & 0 & 0 
    \end{array}\right]
  \end{equation*}
  The Tanner graph corresponding to the
  stacked parity check matrix for LDGM codes is shown in
  Fig. \ref{fig:tanner} and iterative decoding is performed on
  this Tanner graph.


  \section{Outline of the paper and summary of results}
  \label{sec:summary-results}
  We now summarize the main results of this paper.
  \begin{itemize}
    \item In Section~\ref{sec:analysis}, we consider the design and analysis of LDGM codes. We derive the density evolution equations for LT codes
    in Section~\ref{sec:dens-evol-equat}. In Section~\ref{sec:linear-programming}, we consider the design of LDGM codes for the extremal symmetric point and the corner point of the capacity region using linear programming.

  \item  In Section~\ref{sec:symmetric-sum-rate}, we first show
  analytically that LT codes with iterative decoding can achieve the extremal symmetric point of
    the ACPR. However, they cannot achieve the corner point of the ACPR and, hence,
    LT codes with iterative decoding cannot be universal for the joint source-channel coding problem.

  \item In Section~\ref{sec:performance}, we show from simulations that LT codes and the (4,6) LDPC code using maximum likelihood decoding are nearly universal.
  \end{itemize}
These results essentially show that the problem in obtaining universality with the LT ensemble is essentially with the decoding algorithm
rather than with code ensemble. This motivates us to find other decoding algorithms such as enhancements to
message passing decoding that are nearly universal or to consider other code ensembles than the LT code ensemble
with iterative decoding.

    \section{Design and Analysis of LDGM Codes}
  \label{sec:analysis}

  \subsection{Density Evolution Equations}
  \label{sec:dens-evol-equat}
  Assume that the sequences $U^{(1)}$ and $U^{(2)}$ are encoded using LT
  codes with degree distribution pairs
  $\left(\lambda^{(i)},\rho^{(i)}\right)$, for $i=1,2$.
  Based on standard notation \cite{RU-2008}, for $i=1,2$, we let
  $\lambda^{(i)}(x) = \sum_j \lambda_j^{(i)} x^{j-1}$ be the
  degree distribution (from an edge perspective) corresponding to the
  information variable nodes and $\rho^{(i)}(x) = \sum_j  \rho_j^{(i)} x^{j-1}$
  be the degree distribution (from an edge perspective) of the generator (aka check) nodes
  in the decoding graph.  The coefficient $\lambda_j^{(i)}$ (resp. $\rho_j^{(i)}$)
  gives the fraction of edges that connect to the information variable nodes
  (resp. generator nodes) of degree $j$.
  Likewise, $L^{(i)}(x) = \sum_j L^{(i)}_j x^j$ (resp. $R^{(i)}(x) = \sum_j R^{(i)}_j x^j$) is the
degree distributions from the node perspective and $L^{(i)}_j$ (resp. $R^{(i)}_j$) is the
fraction of information variable (resp. generator) nodes with degree $j$.

  Since the encoded variable nodes are are attached to generator nodes randomly,
the degree of a each information variable is a Poisson random variable whose mean
is given by the average number of edges attached to each variable node.  This mean is given by
$\alpha_i = R'^{(i)}(1) / R_i$, where $R'^{(i)}(1)$ is the average generator (or check) degree.
Therefore, the resulting degree distribution is
$L^{(i)}(x) = \text{e}^{\alpha_i(x-1)}$ for $i=1,2$.

  The Tanner graph \cite{RU-2008} for the code is shown in Fig. \ref{fig:tanner},
  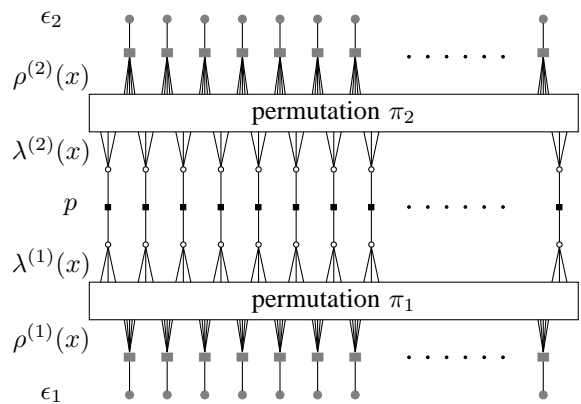
\begin{figure}[!h]
    \centering
\begin{tikzpicture}[scale=0.5]
\draw (-1,2) rectangle +(13,1);
\draw (5.5,2.5) node {permutation $\pi_1$};
\draw (-1,7) rectangle +(13,1);
\draw (5.5,7.5) node {permutation $\pi_2$};
\foreach \x in {0,1,2,3,4,5,6,11}
{
\filldraw[gray] (\x,0)+(2pt,0pt) circle (3pt);
\draw (\x,0)+(2pt,3pt) -- ([xshift=2pt]\x,1); 
\filldraw[gray] (\x,0.9)+(-2pt,0pt) rectangle +(6pt,6pt);
\draw (\x,1)+(2pt,4pt) -- (\x,2);
\draw (\x,1)+(2pt,4pt) -- ([xshift = 4pt]\x,2);
\draw (\x,1)+(2pt,4pt) -- ([xshift = 2pt]\x,2);
\draw (\x,1)+(2pt,4pt) -- ([xshift = -2pt]\x,2);
\draw (\x,1)+(2pt,4pt) -- ([xshift = 6pt]\x,2);
\draw (\x,4)+(0.5,0) circle (2pt);
\draw (\x,4)+(0.5cm,-2pt) -- ([xshift=0.5cm]\x,3);
\draw (\x,4)+(0.5cm,-2pt) -- ([xshift=0.3cm]\x,3);
\draw (\x,4)+(0.5cm,-2pt) -- ([xshift=0.7cm]\x,3);

\draw (\x,4)+(0.5cm,2pt) -- ([xshift=0.5cm]\x,5.925);

\draw (\x,6)+(0.5,0) circle (2pt);
\draw (\x,6)+(0.5cm,2pt) -- ([xshift=0.5cm]\x,7);
\draw (\x,6)+(0.5cm,2pt) -- ([xshift=0.3cm]\x,7);
\draw (\x,6)+(0.5cm,2pt) -- ([xshift=0.7cm]\x,7);
\draw (\x,9)+(2pt,0pt) -- (\x,8);
\draw (\x,9)+(2pt,0pt) -- ([xshift = 4pt]\x,8);
\draw (\x,9)+(2pt,0pt) -- ([xshift = 2pt]\x,8);
\draw (\x,9)+(2pt,0pt) -- ([xshift = -2pt]\x,8);
\draw (\x,9)+(2pt,0pt) -- ([xshift = 6pt]\x,8);
\filldraw[gray] (\x,10)+(2pt,0pt) circle (3pt);
\draw (\x,10)+(2pt,-3pt) -- ([xshift=2pt]\x,9.15); 
\filldraw[gray] (\x,9)+(-2pt,0pt) rectangle +(6pt,6pt);
}
\draw (-0.5,4) circle (2pt);
\draw (-0.5,4)+(0cm,-2pt) -- ([xshift=0.5cm]-1,3);
\draw (-0.5,4)+(0cm,-2pt) -- ([xshift=0.3cm]-1,3);
\draw (-0.5,4)+(0cm,-2pt) -- ([xshift=0.7cm]-1,3);

\draw (-1,4)+(0.5cm,2pt) -- ([xshift=0.5cm]-1,5.925);

\draw (-0.5,6) circle (2pt);
\draw (-0.5,6)+(0cm,2pt) -- ([xshift=0.5cm]-1,7);
\draw (-0.5,6)+(0cm,2pt) -- ([xshift=0.3cm]-1,7);
\draw (-0.5,6)+(0cm,2pt) -- ([xshift=0.7cm]-1,7);
\foreach \x in {0.425,1.425,2.425,...,6.425,11.425} {
\filldraw (\x,4.925) rectangle +(4pt,4pt);
}
\filldraw (-0.575,4.925) rectangle +(4pt,4pt);
\foreach \x in {7.5,8,8.5,...,10} {
\foreach \y in {1,5,9} {
\filldraw (\x,\y) circle (1pt);
}}
\draw (-2,0) node {$\epsilon_1$};
\draw (-2,1.5) node {$\rho^{(1)}(x)$};
\draw (-2,3.5) node {$\lambda^{(1)}(x)$};
\draw (-1.5,5) node {$p$};
\draw (-2,10) node {$\epsilon_2$};
\draw (-2,8.5) node {$\rho^{(2)}(x)$};
\draw (-2,6.5) node {$\lambda^{(2)}(x)$};
\end{tikzpicture}

    \caption{Tanner Graph of an LT Code with erasure correlation between the sources}
    \label{fig:tanner}
  \end{figure}
  from which the density evolution equations \cite{RU-2008} in terms of
  the generator-node to variable-node messages ($x_i$ and $y_i$
  corresponding to codes $1$ and $2$) can be written as follows
  \begin{equation*}
    \begin{split}
      x_{i+1} &= 1 - (1-\epsilon_1)\rho^{(1)}\left(1 -\left((1-p)+pL^{(2)}(y_i)\right)\lambda^{(1)}(x_i)\right)\\
      y_{i+1} &= 1 - (1-\epsilon_2)\rho^{(2)}\left(1 -\left((1-p)+pL^{(1)}(x_i)\right)\lambda^{(2)}(y_i)\right),
    \end{split}
  \end{equation*}
  where $L^{(i)}(x)$, for $i=1,2$, are the degree distributions (from
  the node perspective) corresponding to the information bits. For
  analysis, it is easier to consider the evolution of
  the variable-node to generator-node messages, given by
  \begin{equation*}
    \begin{split}
      x_{i+1} &= \left[(1-p)+pL^{(2)}\left(\varrho^{(2)}(\epsilon_2,y_i)\right)\right]\lambda^{(1)}\left(\varrho^{(1)}(\epsilon_1,x_i)\right)\\
      y_{i+1} &= \left[(1-p)+pL^{(1)}\left(\varrho^{(1)}(\epsilon_1,x_i)\right)\right]\lambda^{(2)}\left(\varrho^{(2)}(\epsilon_2,y_i)\right),
    \end{split}
  \end{equation*}
  where $\varrho^{(i)}(\epsilon,x) = 1 -
  (1-\epsilon)\rho^{(i)}(1-x)$.  Notice that, for LT codes, the
  variable-node degree distribution from the edge perspective is given
  by $\lambda^{(i)}(x) = L^{(i)}(x)$ because $\lambda(x) \triangleq
  L'(x)/L'(1)$ for Poisson $L(x)$.  With this simplification, the
  density evolution equations can be written as
  \begin{equation*}
    \begin{split}
      x_{i+1} &= \left[(1-p)+p\lambda^{(2)}\left(\varrho^{(2)}(\epsilon_2,y_i)\right)\right]\lambda^{(1)}\left(\varrho^{(1)}(\epsilon_1,x_i)\right)\\
      y_{i+1} &= \left[(1-p)+p\lambda^{(1)}\left(\varrho^{(1)}(\epsilon_1,x_i)\right)\right]\lambda^{(2)}\left(\varrho^{(2)}(\epsilon_2,y_i)\right).
    \end{split}
  \end{equation*}

  \subsection{Optimization of degree distributions via Linear Programming}
  \label{sec:linear-programming}
  We use linear programming to design two LT codes.  The first code, called LT code I,
  is designed using the successful decoding constraints for the extremal symmetric point,
  given by the channel condition $(\epsilon,\epsilon) =
  \left(1-\frac{2-p}{2}R,1 - \frac{2-p}{2}R\right)$, as follows.
  \begin{itemize}
  \item Choose the maximum check degree to be $N$.
  \item Compute $\alpha = \frac{1 + G_N(p)}{1-\epsilon}$, with
    $G_N(p)$ as defined in (\ref{eq:gnp_def}).
  \item Maximize $\sum_i\rho_i/i$, subject to, $\forall\;x\in[0,1]$,
    \begin{equation}
      \label{eq:lp}
      \hspace{-3mm}
      \sum_{1\leq i\leq N} \rho_i \cdot \left( 1 - \left((1-p)+p\kappa(\epsilon,x)\right)\kappa(\epsilon,x)\right)^{i-1} < x,
    \end{equation}
where $ \kappa(\epsilon,x) = \text{e}^{\alpha(1-\epsilon)(x-1)}$. 

  \end{itemize}
  The constraints in (\ref{eq:lp}) are obtained from the density
  evolution equations, in terms of the generator-node to variable-node
  messages, described in Section \ref{sec:symmetric-sum-rate} (the
  messages correspond to a modified Tanner graph, where all the
  generator nodes corresponding to the erasures in the received sequence
  have been removed). To achieve a corner point in the Slepian-Wolf
  region, given by the channel condition $(\epsilon_1,\epsilon_2) =
  \left(1-(1-p)R,1 - R\right)$, the constraints in
  (\ref{eq:corner-point-constraints}) were added (obtained from the
  density evolution equations described in \ref{sec:corner-point},
  assuming that the code corresponding to the better channel has
  converged).  This gives, $\forall \; x\in [0,1]$,
  \begin{align}
    \label{eq:corner-point-constraints}
    \sum_{1\leq i\leq N} \rho_i\cdot \left(1 -
    \left((1-p)+p\kappa(\epsilon_1,0)\right)\kappa(\epsilon_2,x)\right)^{i-1} \! < \! x.
  \end{align}

  \subsection{The extremal symmetric point}
  \label{sec:symmetric-sum-rate}
  We first analyze a code  optimized for the case when both
  channels have the same erasure probability
  ($\epsilon_1=\epsilon_2=\epsilon$), to understand the criteria for
  achieving universality. Due to the symmetry of the model for this
  case, we have $\rho^{(1)}(x)=\rho^{(2)}(x)=\rho(x)$ and
  $\lambda^{(1)}(x)=\lambda^{(2)}(x)=\lambda(x) =
  \text{e}^{\alpha(x-1)}$, and the density evolution equations collapse into a
  one-dimensional recursion, given by
  \begin{equation*}
    \label{eq:one-de} 
      x_{i+1} \! = \! \bigl[(1-p)+p\lambda\bigl(1-(1-\epsilon)\rho(1-x_i)\bigr)\bigr]\lambda\bigl(1-(1-\epsilon)\rho(1-x_i)\bigr).
  \end{equation*}
  This recursion can be solved analytically, resulting in the unique non-negative $\rho(x)$ which satisfies 
  \begin{equation*}
    \label{eq:one-deeq} 
      x = \bigl[(1-p)+p\lambda\bigl(1-(1-\epsilon)\rho(1-x)\bigr)\bigr]\lambda\bigl(1-(1-\epsilon)\rho(1-x)\bigr).
  \end{equation*}
  The solution is given by
  \begin{equation*}
    \begin{split}
      \rho(x) &= \frac{-1}{\alpha(1-\epsilon)}\cdot\log\left(\frac{\sqrt{(1-p)^2+4p(1-x)}-(1-p)}{2p}\right)\\
      &= \frac{1}{\alpha(1-\epsilon)}\sum_{i=1}^{\infty}
      \frac{\sum_{k=0}^{i-1}\binom{2i-1}{k}p^k}{i(1+p)^{2i-1}}x^i,
    \end{split}
  \end{equation*}
  and we note that is not a valid degree distribution because it has infinite
  mean. To overcome this, we define a truncated version of the generator degree
  distribution via
  \begin{equation}
    \label{eq:gnp_def}
    \begin{split}
      \rho^N(x) &= \frac{\mu+\sum_{i=1}^{N} \frac{\sum_{k=0}^{i-1}\binom{2i-1}{k}p^k}{i(1+p)^{2i-1}}x^i+x^N}{\mu+G_N(p)+1}\\
      G_N(p) &=
      \sum_{i=1}^N\frac{\sum_{k=0}^{i-1}\binom{2i-1}{k}p^k}{i(1+p)^{2i-1}},
    \end{split}
  \end{equation}
  for some $\mu>0$ and $N\in \mathbb{N}$.  This is a well defined degree
  distribution as all the coefficients are non-negative and
  $\rho^N(1)=1$. The parameter $\mu$ increases the number of degree one
  generator nodes and is introduced in order to overcome the stability problem at
  the beginning of the decoding process \cite{Luby-focs02}.\newline
  \begin{theorem}
    \label{thm-sum-rate}
    Consider transmission over erasure channels with parameters
    $\epsilon_1 = \epsilon_2 = \epsilon$. Let $N\in \mathbb{N}$ and $\mu>0$ and
    \begin{align*}
      \alpha &= \frac{\mu + G_N(p) + 1}{1 -\epsilon},
      \intertext{where}
      G_N(p) &= \sum_{i=1}^N\frac{\sum_{k=0}^{i-1}\binom{2i-1}{k}p^k}{i(1+p)^{2i-1}}.
    \end{align*}
    Then, in the limit of infinite blocklengths, the ensemble
    LDGM$\left(n,\lambda(x),\rho^N(x)\right)$, where
    \begin{equation}
      \label{eq:ensemble}
      \begin{split}
      \lambda(x) &= \text{e}^{\alpha(x-1)},\\
      \rho^N(x) &= \frac{\mu+\sum_{i=1}^{N} \frac{\sum_{k=0}^{i-1}\binom{2i-1}{k}p^k}{i(1+p)^{2i-1}}x^i+x^N}{\mu+G_N(p)+1},
      \end{split}
    \end{equation}
    enables transmission at a rate $R =
    \frac{(1-\epsilon)(1-\text{e}^{-\alpha})}{\mu+1-p/2}$, with a bit
    error probability not exceeding $1/N$.
  \end{theorem}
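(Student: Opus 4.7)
My plan is to prove the bit-erasure bound by showing that the symmetric one-dimensional density-evolution recursion is strictly contractive down to a fixed point at most $1/N$, and then to derive the rate by a direct edge-counting argument supplemented by an explicit integration of the untruncated closed-form generator profile.

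In the symmetric setting the DE collapses to $x_{i+1}=f(x_i)$ with $f(x)=[(1-p)+p\lambda(\varrho)]\lambda(\varrho)$, $\lambda(u)=e^{\alpha(u-1)}$, and $\varrho(\epsilon,x)=1-(1-\epsilon)\rho^N(1-x)$. The key observation is that the untruncated series $\rho^{\mathrm{id}}(y)=\sum_{i\ge 1}(c_i/i)y^i/[\alpha(1-\epsilon)]$ (with $c_i=\sum_{k=0}^{i-1}\binom{2i-1}{k}p^k/(1+p)^{2i-1}$) was constructed precisely so that $f^{\mathrm{id}}(x)=x$ on all of $[0,1]$: writing $v=\lambda(\varrho(\epsilon,x))=e^{-\alpha(1-\epsilon)\rho(1-x)}$, the equation $f(x)=x$ reduces to $pv^2+(1-p)v=x$, whose positive root is $\phi(x)=(\sqrt{(1-p)^2+4px}-(1-p))/(2p)$, giving $\rho^{\mathrm{id}}(1-x)=-\log\phi(x)/[\alpha(1-\epsilon)]$. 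Hence $f^N(x)<x$ is equivalent to $\rho^N(1-x)>\rho^{\mathrm{id}}(1-x)$, i.e., to the positivity of $\mu+(1-x)^N-\sum_{i>N}(c_i/i)(1-x)^i$.

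I will establish this positivity for $x\in(1/N,1]$ using (i)~$c_i\le 1$, which holds because $c_i=\Pr[\mathrm{Bin}(2i-1,p/(1+p))<i]$ is a tail probability, and (ii)~the tail estimate $\sum_{i>N}y^i/i\le y^{N+1}/[(N+1)(1-y)]$. These reduce the main inequality to $(N+1)x>1-x$, i.e., $x>1/(N+2)$, with the stability constant $\mu>0$ absorbing the remaining strip $x\in(1/N,1/(N+2)]$. Since $f^N$ is monotone increasing in $x$ and the first step already decreases strictly ($x_1=[(1-p)+pe^{-\mu}]e^{-\mu}<1$ thanks to the $\mu$-mass on degree-one generators at $x_0=1$), the DE sequence converges monotonically to a fixed point $x^\star\le 1/N$. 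Because $L=\lambda$ for Poisson $L$, the information-bit erasure probability equals $x^\star$, giving the claimed $1/N$ bound.

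For the rate, standard edge counting gives $nR'(1)=k\alpha$ with $R'(1)=1/\int_0^1\rho^N(y)\,dy$; combined with the convention that the reported rate measures only the non-isolated information bits (fraction $1-e^{-\alpha}$), this yields $R=(1-e^{-\alpha})/[\alpha\int_0^1\rho^N(y)\,dy]$. Integrating the closed form of $\rho^{\mathrm{id}}$ via the substitution $t=\sqrt{(1-p)^2+4pu}$ reduces $\int_0^1\log\phi(u)\,du$ to $\int_0^1(2pr+1-p)\log r\,dr=p/2-1$, so $\int_0^1\rho^{\mathrm{id}}(y)\,dy=(1-p/2)/[\alpha(1-\epsilon)]$, and in particular $\sum_{i\ge 1}c_i/[i(i+1)]=1-p/2$. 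Combining this with the $\mu$-offset and $y^N$ correction defining $\rho^N$, and using $\alpha(1-\epsilon)=\mu+G_N(p)+1$, produces the claimed rate formula. The hardest step of the whole argument is tightening the contraction from the naive threshold $x>1/(N+2)$ up to $x>1/N$, for which I will exploit $\mu$ both as a positive offset in $\rho^N$ and as the driver of the strict first-iteration decrease; the conversion from DE fixed point to bit erasure probability is routine because $L=\lambda$ here, as is the monotonicity of $f^N$ in $x$ needed to rule out oscillation.
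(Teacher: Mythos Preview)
Your approach is essentially the paper's: reduce $f^N(x)<x$ to $\rho^N(1-x)>\rho^{\mathrm{id}}(1-x)$, control the tail $\sum_{i>N}(c_i/i)y^i$ via $c_i\le 1$ and $\sum_{i>N}y^i/i\le y^{N+1}/[(N+1)(1-y)]$, and obtain the rate from $R=\int\lambda/\int\rho^N$ together with $\sum_{i\ge 1}c_i/[i(i+1)]=1-p/2$ (the paper states this last integral identity without the substitution you supply). Your identification $c_i=\Pr[\mathrm{Bin}(2i-1,p/(1+p))<i]$ is a nice touch that makes $c_i\le 1$ immediate, where the paper simply asserts it.

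One harmless slip: you derive the contraction for all $x>1/(N+2)$ and then speak of ``tightening'' to $x>1/N$ and of a ``remaining strip $x\in(1/N,1/(N+2)]$'' to be absorbed by $\mu$. But $1/(N+2)<1/N$, so that interval is empty and no tightening is needed---having shown $f^N(x)<x$ for $x>1/(N+2)$ you already have a DE fixed point $x^\star\le 1/(N+2)<1/N$, which is the claimed bound. The only place $\mu>0$ is genuinely needed for the erasure bound is to force the strict first-step decrease at $x_0=1$ (where the tail comparison degenerates to equality because $\rho^N(0)=\mu/(\mu+G_N(p)+1)$ matches the ideal value), and you note this correctly.
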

  \begin{proof}
    See Appendix~\ref{sec:proof-theorem-refthm}.
  \end{proof}
  From Theorem \ref{thm-sum-rate}, we conclude that optimized LT codes,
  given by the ensemble LDGM$\left(n,\lambda(x),\rho^N(x)\right)$
  can achieve the extremal symmetric point of the capacity region.

  \subsection{A Corner Point}
  \label{sec:corner-point}
  Consider the performance of the ensemble
  LDGM$\left(n,\lambda(x),\rho^{(N)}(x)\right)$, with $\lambda(x)$ and
  $\rho^{(N)}(x)$ as defined in (\ref{eq:ensemble}), at a corner point
  of the Slepian-Wolf region. One corner point is given by the channel
  condition $(\epsilon_1,\epsilon_2) =
  \left(1-(1-p)R,1-(1-p/2)R\right)$. The density evolution equations are
  \begin{equation}
    \label{eq:decorner}
    \begin{split}
      x_{i+1} &=       \left[(1-p)+p\bar{\lambda}^N(\epsilon_2,y_i)\right]\bar{\lambda}^N(\epsilon_1,x_i)\\
      y_{i+1} &=       \left[(1-p)+p\bar{\lambda}^N(\epsilon_1,x_i)\right]\bar{\lambda}^N(\epsilon_2,y_i),\\
    \end{split}
  \end{equation}
  where
  $\bar{\lambda}^N(\epsilon,x)=\lambda\left(1-(1-\epsilon)\rho^N(1-x)\right)$. \newline
  \begin{theorem}
    \label{non-universal}
    LT codes cannot simultaneously achieve the extremal symmetric point
    and a corner point of the Slepian-Wolf region, under iterative
    decoding. 
  \end{theorem}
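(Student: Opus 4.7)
\medskip
\noindent\textit{Proof sketch.}
The plan is a proof by contradiction. Suppose an LT ensemble with $\lambda(x) = e^{\alpha(x-1)}$ and check-node distribution $\rho$ achieves both the extremal symmetric point $(\epsilon,\epsilon)$ with $\epsilon = 1 - (2-p)R/2$ and the corner point $(\epsilon_1,\epsilon_2) = (1-(1-p)R,\, 1-R)$ at the same rate $R$ under iterative decoding. First I extract the structural consequences of symmetric-point achievability; then I derive a density-evolution contradiction at the corner point.

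Step 1 pins down the ensemble. Achieving the symmetric point in the capacity-achieving limit forces the one-dimensional recursion $x = f_{\textrm{sym}}(x)$ derived in Section~\ref{sec:symmetric-sum-rate} to become tight on a dense subset of $(0,1]$. Inverting this tight equation produces the unique formal power series for $\rho$ recorded in (\ref{eq:ensemble}), up to a stabilizing degree-one mass $\mu$ and a truncation at degree $N$. Hence, any such ensemble must (up to these two parameters) coincide with the LDGM$(n,\lambda,\rho^N)$ family from Theorem~\ref{thm-sum-rate}, with $\alpha = (\mu + G_N(p) + 1)/(1-\epsilon)$ and $R = (1-\epsilon)(1-e^{-\alpha})/(\mu + 1 - p/2)$.

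Step 2 shows the corner DE must fail. Substitute this specific $(\lambda, \rho^N, \alpha, R)$ into the corner-point density evolution (\ref{eq:decorner}) at $(\epsilon_1, \epsilon_2)$. To give the ensemble the best possible chance of success, I assume the easier side has already converged, i.e.\ $y_i \to 0$, so that $\rho^N(1-y_i) \to 1$ and the remaining recursion on $x$ reduces to
\[
  x_{i+1} \;=\; \bigl[(1-p) + p\,\lambda(\epsilon_2)\bigr]\,\lambda\bigl(1 - (1-\epsilon_1)\rho^N(1-x_i)\bigr).
\]
Decoding success requires this recursion, started at $x_0 = 1$, to descend to the Poisson floor near $e^{-\alpha}$. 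I plan to show the opposite by exhibiting some $x^\star \in (0,1]$ at which the right-hand side strictly exceeds $x^\star$. A natural candidate is an interior point where the symmetric recursion is already tight, so that $f_{\textrm{sym}}(x^\star) = x^\star$; then I compare factor by factor with the corner recursion, exploiting the identity $(1-\epsilon_1)/(1-\epsilon) = 2(1-p)/(2-p) < 1$ and the fact that $\lambda(\epsilon_2) = e^{-\alpha R}$ vanishes as $N\to\infty$.

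The main obstacle is converting this qualitative comparison into a strict inequality uniformly in $N$ and $\mu$: although the corner correlation prefactor $(1-p) + p\lambda(\epsilon_2)$ is smaller than its symmetric analogue $(1-p) + p\lambda(\varrho(\epsilon,x^\star))$, the outer $\lambda$ factor grows because $\varrho(\epsilon_1,x^\star) > \varrho(\epsilon,x^\star)$, and one has to show the latter growth dominates. I expect the cleanest route is to take logarithms, substitute the explicit power series of $\rho^N$ from (\ref{eq:gnp_def}) and the exponential form of $\lambda$, and reduce the claim to a numeric inequality in $(p,\alpha,R)$ that can be verified directly at the specific $x^\star$ produced by the symmetric-point calculation. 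Once any such violation is exhibited, monotonicity of the two-dimensional DE implies that iterative decoding at the corner cannot succeed, contradicting the assumed universal achievability.
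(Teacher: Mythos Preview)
Your overall architecture---freeze the recursion on the stronger channel at its converged value and show that the remaining one-dimensional recursion on the weaker side retains a strictly positive fixed point---matches the paper's exactly. The paper, however, does not argue your Step~1 at all: it simply takes the ensemble $(\lambda,\rho^N)$ of Theorem~\ref{thm-sum-rate} as given and shows that ensemble fails at the corner, so the theorem as actually proved concerns that specific family rather than an arbitrary symmetric-point-achieving LT ensemble.

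For Step~2 the paper takes a more direct route than your comparison argument. Rather than searching for an interior $x^\star$ at which the symmetric recursion is tight (note that for the truncated $\rho^N$ with $\mu>0$ no such interior point exists, while for the ideal $\rho$ every point is tight, so your candidate is not well specified), the paper substitutes the closed form of $\rho$ into the one-sided recursion, so that $\lambda\bigl(1-(1-\epsilon_2)\rho(1-y)\bigr)$ collapses to a power of $\bigl(\sqrt{(1-p)^2+4py}-(1-p)\bigr)/(2p)$ with exponent $\beta=(1-\epsilon_2)/(1-\epsilon_0)$. The resulting fixed-point equation $y=k(\sqrt{1+ay}-1)^\beta$ is then solved approximately (via a quadratic reduction in $z=y^{1/\beta-1/2}$) to exhibit a strictly positive root that does not vanish as $N\to\infty$ or $\mu\to 0$. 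This sidesteps precisely the obstacle you flag: once the fixed point is available in closed form, no delicate balancing of the two competing factors is needed. Your comparison idea can be pushed through as well---the key missing observation is that $\beta<1$ makes the corner recursion's slope blow up at the origin, forcing a crossing of the diagonal in $(0,1)$---but you have not actually carried that step out, and the paper's direct calculation is both shorter and yields the residual fixed point explicitly.
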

  \begin{proof}
    See Appendix~\ref{sec:proof-theorem-refnon}.
  \end{proof}
  From Theorem \ref{non-universal}, we conclude that LT codes designed
  for the extremal symmetric point are not universal for the two-user
  Slepian-Wolf problem, with erasure correlated sources.


    \section{Performance of Various Code Ensembles}
  \label{sec:performance}
  In this section, we study the performance of three code ensembles under iterative
  and maximum likelihood decoding using simulations. The codes considered are 
  \begin{enumerate}
  \item A linear code with a random generator matrix.
  \item A $(4,6)$ regular LDPC code with punctured systematic bits.
  \item Two LT codes (LT code I and LT code II) optimized for different
    points in the capacity region.
  \end{enumerate}
  LT code I is optimized for the case when both channels have the same
  erasure probability (i.e., the extremal symmetric point of the capacity
  region). LT code II is optimized for the extremal symmetric point,
  including constraints corresponding to channel conditions at one
  corner of the capacity region. Joint iterative decoding is performed
  on the Tanner graph corresponding to the stacked parity check matrix
  $H$. The simplified message passing rules for the BEC are used. They
  are stated here for convenience. At a variable node, the outgoing
  message is an erasure if all incoming messages are
  erasures. Otherwise, all non-erasure messages must have the same
  value, and the outgoing message is equal to the common value. At the
  check node, the outgoing message is an erasure if any of the incoming
  messages is an erasure.
  Otherwise, the outgoing message is the XOR of all the
  incoming messages.  Joint ML decoding was performed on the stacked
  parity check matrix as described in Section \ref{sec:ldgm-codes}.

  The simulations were performed with codes of rate $1/2$ (i.e., two
  encoded bits are generated per source bit), and a blocklength of
  $500$. We chose a source correlation of $p=0.5$, and simulated $300$
  blocks for each point in the capacity region. All the plots are shown
  in the $(\epsilon_1,\epsilon_2)$-plane, for the rate pair $(1/2,1/2)$.

\subsection{Random Codes}
\label{sec:random-codes}
Two different codes of rate $1/2$ are chosen randomly from the
generator-matrix ensemble, where the entries of the generator matrix
are i.i.d. Bernoulli-$1/2$ random variables. Decoding was performed on
the stacked parity-check matrix corresponding to LDGM codes. The ACPR
of random codes under iterative and ML decoding is shown in
Fig. \ref{fig:rand_ml}, respectively. As expected, random codes
achieve the entire capacity region under ML decoding, but perform very
poorly under iterative decoding. The ACPR with iterative decoding
consists of only $3$ non-trivial points with channel parameters very
close to zero.
  \begin{figure}[!t]
    \centering
    \includegraphics[width=0.35\textwidth]{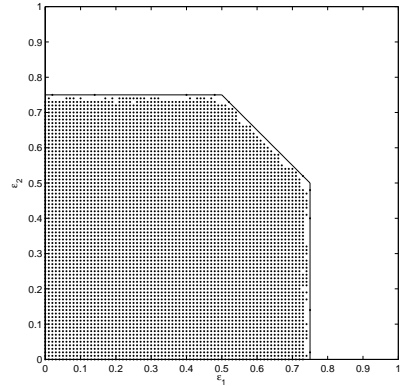}
    \label{fig:rand_ml}
  
  \caption{ACPR for a Random Code under ML Decoding}
\end{figure}

\subsection{LT Codes}
\label{sec:lt-codes}
LT codes have been shown to be universal for the single-user erasure
channel. Here, we study the performance of LT codes for the two-user
erasure channel and consider the case of encoding and decoding at the
extremal symmetric point of the Slepian-Wolf region. An LT code is
optimized for this point (LT Code I), using linear programming (see
Section~\ref{sec:linear-programming}), resulting in the  degree
distribution given by
\begin{equation*}
  \begin{split}
  \label{eq:optimized-corner}
  \rho(x) =\; &0.0001 + 0.0754\cdot x + 0.0295\cdot x^{2} + 0.0620\cdot x^{3} + \\ 
              &0.0857\cdot x^{7} + 0.0718\cdot x^{15} + 0.0970\cdot x^{31} + \\
              &0.0114\cdot x^{63} + 0.5671\cdot x^{127}.
\end{split}
\end{equation*}
The performance of this code under iterative decoding is shown in
Fig. \ref{fig:lt1_it}. Also shown in Fig. \ref{fig:lt1_it} is the
simulated density evolution threshold for LT code I. The density
evolution threshold at the extremal symmetric point is away from
capacity for this code due to limiting the maximum check degree in the
design process. Also, note that the density evolution threshold is far
away from capacity at the corner points of the capacity region. On the
other hand, as seen in Fig. \ref{fig:lt1_ml}, the code performs much
better under ML decoding and is closer to capacity at the corner points
of the capacity region. This reinforces the conclusion that most of the
sub-universality is due to the iterative decoder, rather than the
stability of the code.\par
In order to achieve capacity at the corner points of the capacity
region, LT code II was designed by adding constraints corresponding to
the channel conditions at a corner point of the capacity region (see
Section~\ref{sec:linear-programming}), resulting in the following degree
distribution, 
\begin{equation*}
  \begin{split}
  \label{eq:optimized-dd-corner}
  \rho(x) =\; &0.0001 + 0.0640\cdot x + 0.0251\cdot x^{2} + 0.0526\cdot x^{3} + \\
              &0.0725\cdot x^{7} + 0.0619\cdot x^{15} + 0.0806\cdot x^{31} + \\
              &0.0082\cdot x^{63} + 0.6351\cdot x^{127}.
\end{split}
\end{equation*}
The performance of this code under iterative decoding (and the simulated
density evolution threshold) ML decoding is shown in
Fig. \ref{fig:lt2_it} and Fig. \ref{fig:lt2_ml} respectively. Note that
the density evolution threshold increases only marginally, and the
performance under ML decoding is almost the same.
\begin{figure}[!b]
  \centering
  \subfigure[ACPR for LT Code I under Iterative Decoding]{
    \includegraphics[width=0.35\textwidth]{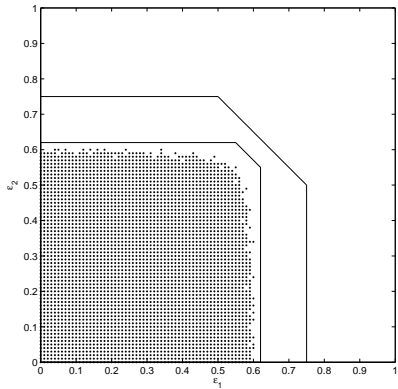}
    \label{fig:lt1_it}
  }
  \subfigure[ACPR for LT Code I under ML Decoding]{
    \includegraphics[width=0.35\textwidth]{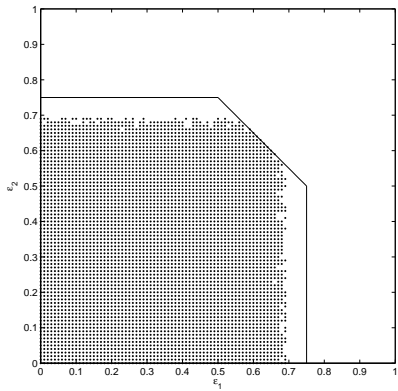}
    \label{fig:lt1_ml}
  }
  \caption{Performance of LT Code I}
\end{figure}

\begin{figure}[htb]
  \centering
  \subfigure[ACPR for LT Code II under Iterative Decoding]{
    \includegraphics[width=0.35\textwidth]{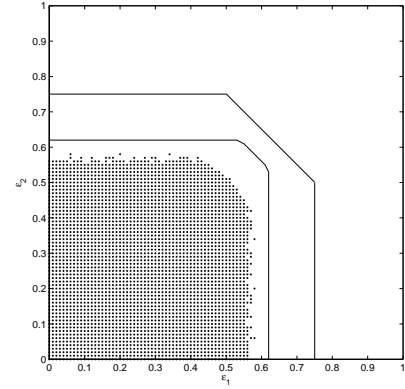}
    \label{fig:lt2_it}
  }
  \subfigure[ACPR for LT Code II under ML Decoding]{
    \includegraphics[width=0.35\textwidth]{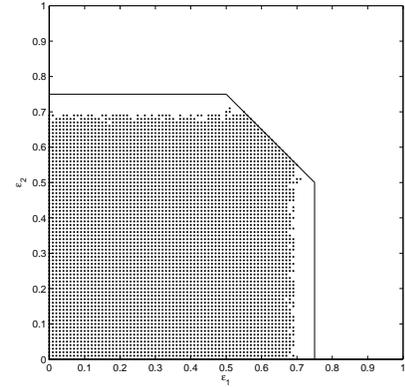}
    \label{fig:lt2_ml}
  }
  \caption{Performance of LT Code II}
\end{figure}

\subsection{LDPC Codes}
\label{sec:ldpc-codes}
Here, we consider the performance of a punctured $(4,6)$ LDPC code for the joint source-channel coding problem. 
Two systematic codes were chosen from the ensemble
LDPC$(4,6)$, and the systematic bits are punctured before transmission,
resulting in a code of rate $1/2$ (two encoded bits are transmitted per
source bit).\par
The $(4,6)$ codes achieve the entire capacity region under ML decoding,
as shown in Fig. \ref{fig:36_ml}, and the iterative decoding threshold
is significantly lower as seen in Fig. \ref{fig:36_it}. Again, this shows that
the iterative decoder is the main reason for the loss of universality.
\begin{figure}[htb]
  \centering
  \subfigure[ACPR for a $(4,6)$ LDPC Code under Iterative Decoding]{
    \includegraphics[width=0.35\textwidth]{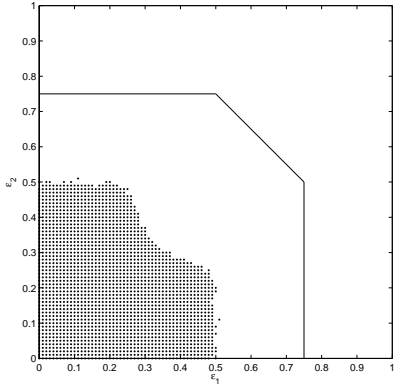}
    \label{fig:36_ml}
  }
  \subfigure[ACPR for a $(4,6)$ LDPC Code under ML Decoding]{
    \includegraphics[width=0.35\textwidth]{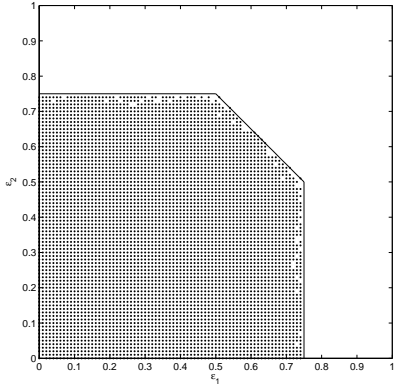}
    \label{fig:36_it}
  }
  \caption{Performance of $(4,6)$ LDPC Codes}
\end{figure}


  \section{Conclusions and Future Work}
  \label{sec:results-future}

  In this paper, we considered the performance of graph based codes with iterative
  decoding for obtaining universal performance when transmitting correlated sources
  over binary erasure channels. We designed an LT code which can achieve the extremal symmetric
  point. We then showed that an LT code optimized for the symmetric
  sum-rate point cannot achieve a corner point of the capacity
  region and, hence, we concluded that LT codes cannot be universal for this two user
  Slepian-Wolf problem. Our simulation results indicate that a punctured LDPC code ensemble
  and LT ensemble are nearly universal with maximum likelihood decoding.\par
  For future work, we plan to do the following.
  \begin{itemize}
  \item Analyze the performance of a carefully
  designed protograph code to try and achieve universality with iterative decoding.
\item Since ML decoding is nearly universal  and iterative
  decoding is not universal, we would like to see if there is an enhancement to iterative decoding that can be nearly
  universal but is yet significantly less complex than ML decoding.
  \end{itemize}

  \appendices
    \section{Proof of Theorem \ref{thm-sum-rate}}
  \label{sec:proof-theorem-refthm}
  We will use the following Lemma to show that the density evolution equations converge to zero at
  the extremal symmetric point. \newline
  \begin{lem}
    \label{sum-rate}
    \begin{equation*}
      \rho^N(x)> \frac{\mu+\rho(x)}{\mu+G_N(p)+1}, \text{ for } 0\leq x<1-\frac{1}{N}.
    \end{equation*}
   \end{lem}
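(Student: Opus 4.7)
The plan is to reduce the proposed inequality to a simple tail bound on the power series defining $\rho(x)$, and then dispatch that bound with an elementary coefficient estimate and a geometric sum. Write $c_i := \tfrac{\sum_{k=0}^{i-1}\binom{2i-1}{k}p^k}{i(1+p)^{2i-1}}$, so that $\rho(x) = \sum_{i\geq 1} c_i x^i$ (the pure series, matching the numerator structure of $\rho^N$) and $G_N(p) = \sum_{i=1}^N c_i$. Both sides of the claimed inequality share the denominator $\mu+G_N(p)+1$ and the numerator contributions $\mu+\sum_{i=1}^N c_i x^i$. After cancellation, the lemma reduces to
\[
x^N \;>\; \sum_{i=N+1}^{\infty} c_i\, x^i, \qquad 0 \leq x < 1-\tfrac{1}{N}.
\]

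The first substantive step is the coefficient bound $c_i \leq 1/i$. This is immediate once one observes that the numerator of $c_i$, namely $\sum_{k=0}^{i-1}\binom{2i-1}{k}p^k$, is the partial sum consisting of the first $i$ terms of the binomial expansion $(1+p)^{2i-1}=\sum_{k=0}^{2i-1}\binom{2i-1}{k}p^k$. Since $p\geq 0$, every summand is nonnegative, so this partial sum is at most $(1+p)^{2i-1}$, and dividing by $i(1+p)^{2i-1}$ gives $c_i\leq 1/i$.

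Next, a geometric tail estimate: for $0\leq x<1$,
\[
\sum_{i=N+1}^{\infty} c_i x^i \;\leq\; \sum_{i=N+1}^{\infty} \frac{x^i}{i} \;\leq\; \frac{1}{N+1}\sum_{i=N+1}^{\infty} x^i \;=\; \frac{x^{N+1}}{(N+1)(1-x)}.
\]
It therefore suffices to check $\tfrac{x^{N+1}}{(N+1)(1-x)} < x^N$. For $x>0$ this rearranges to $x<(N+1)(1-x)$, i.e., $x<\tfrac{N+1}{N+2}$. Since $1-\tfrac{1}{N}=\tfrac{N-1}{N}<\tfrac{N+1}{N+2}$ (cross-multiplying gives $N^2+N-2<N^2+N$), the inequality holds throughout $(0,\,1-\tfrac{1}{N})$; at $x=0$ both sides coincide, so the stated strict inequality is to be read on the open interval.

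I do not anticipate any serious obstacle. The whole argument hinges on recognizing that the apparently intricate coefficients $c_i$ are dominated by the much simpler sequence $1/i$ via a partial-sum-of-a-binomial observation; after that, everything is bookkeeping. The only thing to verify carefully is that the crude coefficient bound is \emph{tight enough}, i.e., that the threshold $\tfrac{N+1}{N+2}$ coming out of the geometric estimate already covers the full range $[0,\,1-\tfrac{1}{N})$ promised by the lemma, which is exactly what the elementary comparison $(N-1)(N+2)<N(N+1)$ secures.
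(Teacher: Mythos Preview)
Your proposal is correct and follows essentially the same route as the paper's proof: reduce to the tail inequality $x^N>\sum_{i\geq N+1}c_i x^i$, bound $c_i\leq 1/i$, and finish with the geometric sum $\frac{x^{N+1}}{(N+1)(1-x)}<x^N$ on the stated range. You supply a bit more detail than the paper does---the explicit partial-binomial justification of $c_i\leq 1/i$, the exact threshold $\tfrac{N+1}{N+2}$, and the observation that equality holds at $x=0$---but the argument is the same.
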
 
   \begin{proof}
  For $0\leq x<1-\frac{1}{N}$, we have
  \begin{align*}
      \rho^N(x) &= \frac{\mu+\sum_{i=1}^{N} \frac{\sum_{k=0}^{i-1}\binom{2i-1}{k}p^k}{i(1+p)^{2i-1}}x^i+x^N}{\mu+G_N(p)+1}\\
      &= \frac{\mu+\rho(x)+x^N}{\mu+G_N(p)+1}-\\
      &\phantom{=-}\frac{\sum_{i=N+1}^{\infty} \frac{\sum_{k=0}^{i-1}\binom{2i-1}{k}p^k}{i(1+p)^{2i-1}}x^i}{\mu+G_N(p)+1}\\
      &> \frac{\mu+\rho(x)}{\mu+G_N(p)+1}\\
 \intertext{ The last step follows from the fact that}
 \sum_{i=N+1}^{\infty} \frac{\sum_{k=0}^{i-1}\binom{2i-1}{k}p^k}{i(1+p)^{2i-1}}x^i &< \sum_{i=N+1}^{\infty}\frac{x^i}{i} \\
      &< \frac{1}{N+1}\sum_{i=N+1}^{\infty}x^i\\
      &= \frac{1}{N+1}\cdot\frac{x^{N+1}}{1-x}\\
      &< x^N,
    \end{align*}
  where the last step follows from explicit calculations, taking into
  account that $0\leq x<1-\frac{1}{N}$.      
   \end{proof}
   From \eqref{eq:one-de}, the convergence criteria for the density
   evolution equation is given by
  \begin{equation*}
    \begin{split}
      x >  \left[(1-p)+p\bar{\lambda}^N(\epsilon,x)\right]\bar{\lambda}^N(\epsilon,x)
    \end{split}
  \end{equation*}
  Consider the term $\bar{\lambda}^N(\epsilon,x) =
  \lambda\left(1-(1-\epsilon)\rho^N(1-x)\right)$. We have,
  \begin{equation*}
    \begin{split}
      \bar{\lambda}^N(\epsilon,x) &= e^{-\alpha(1-\epsilon)\cdot\rho^N(1-x)}     \\
      &\leq e^{-\alpha(1-\epsilon)\frac{\mu+\rho(1-x)}{\mu+G_N(p)+1}},\textnormal{ if } x\geq\frac{1}{N}\\
      &< e^{-\mu}\cdot \frac{\sqrt{(1-p)^2+4px}-(1-p)}{2p}\\
      &< \frac{\sqrt{(1-p)^2+4px}-(1-p)}{2p},
    \end{split}
  \end{equation*}
  where the first inequality follows from Lemma \ref{sum-rate}. The
  polynomial $f(y)=py^2+(1-p)y-x$ is a convex function of $y$, with the
  only positive root at $y = \frac{\sqrt{(1-p)^2+4px}-(1-p)}{2p}$.  So,
  if $y < \frac{\sqrt{(1-p)^2+4px}-(1-p)}{2p}$, then $f(y)<0$. Hence,
  $\left[(1-p)+p\bar{\lambda}(\epsilon,x)\right]\bar{\lambda}(\epsilon,x)-x<0$
  and the density evolution equation converges, as long as
  $x\geq\frac{1}{N}$. So, the probability of erasure is upper bounded by
  $1/N$.\par
  The rate of the code is computed as 
  \begin{equation*}
    R = \frac{\int_0^1\lambda(x)\,\text{d}x}{\int_0^1\rho^{(N)}(x)\,\text{d}x}.
  \end{equation*}
  We have
  \begin{align*}
    \int_0^1\rho^N(x)\,\text{d}x &= \frac{\mu+\sum_{i=1}^{N}
    \frac{\sum_{k=0}^{i-1}\binom{2i-1}{k}p^k}{i(i+1)(1+p)^{2i-1}}x^i + \frac{1}{N+1}}{\mu+G_N(p)+1} \\
    \intertext{also}
    \lim_{N\to \infty}\int_0^1\rho^N(x)\,\text{d}x &= \int_0^1\rho(x)\,\text{d}x \\
    &= 1-\frac{p}{2}
  \end{align*}
  Note that $\int_0^1\rho^{(N)}(x)\,\text{d}x$ is a monotonically
  increasing sequence, upper bounded by $1-\frac{p}{2}$. So, in the
  limit of infinite blocklengths the design rate is given by 
  \begin{equation*}
    R = \frac{(1-\epsilon)(1-e^{-\alpha})}{\mu+ (1-\frac{p}{2})}.
  \end{equation*}
 
\section{Proof of Theorem \ref{non-universal}}
\label{sec:proof-theorem-refnon}
To analyze the convergence of the ensemble
LDGM$\left(n,\lambda(x),\rho^{(N)}(x)\right)$, consider the functions
  \begin{equation*}
    \begin{split}
      f(x,y)=\left[(1-p)+p\bar{\lambda}^N(\epsilon_2,y)\right]\bar{\lambda}^N(\epsilon_1,x) - x\\
      g(x,y)=\left[(1-p)+p\bar{\lambda}^N(\epsilon_1,x)\right]\bar{\lambda}^N(\epsilon_2,y) - y.
    \end{split}
  \end{equation*}
  The condition for convergence of the density evolution equations are
  given by $f(x,y)<0$ and $g(x,y)<0$. When $\epsilon_1<\epsilon_2$, we
  can approximately characterize the convergence by analyzing the
  condition $g(0,y)<0$. We have
  \begin{align*}
      g(0,y) &= \left[(1-p)+p\lambda(\epsilon_1)\right]\lambda\left(1-(1-\epsilon_2)\rho^N(1-y)\right) - y\\
      & < \left[(1-p)+p\lambda(\epsilon_1)\right]\lambda\left(1-(1-\epsilon_2)\rho(1-y)\right) - y\\
      & = k\left(\sqrt{1+ay}-1\right)^\beta-y,\\
      \intertext{where}
      k &= \left(\frac{\textnormal{e}^{-\mu}(1-p)}{2p}\right)^{\frac{1-\epsilon_2}{1-\epsilon_0}}\left[(1-p)+p\textnormal{e}^{-\alpha(1-\epsilon_1)}\right],\\
      \beta &= \frac{1-\epsilon_2}{1-\epsilon_0} \text{ and } a = \frac{4p}{(1-p)^2}
  \end{align*}
  The fixed point of $g(0,y)$ can be found by solving
  \begin{align*}
    y &= k\left(\sqrt{1+ay}-1\right)^\beta, \text{ i.e.,}\\ 
    \sqrt{1+ay} &= 1+k^{-1/\beta}y^{1/\beta} \\
    \intertext{This equation is of the form}
    k^{-2/\beta}y^{(2/\beta-1)} &+2k^{-1/\beta}y^{(1/\beta-1)}-a=0,\\
    \intertext{the root of which is approximately equal to the root of the quadratic}
    k^{-2/\beta}z^2 &+2k^{-1/\beta}z-a,
  \end{align*}
  where $z = y^{(1/\beta-1/2)}$.  The positive root of the quadratic is
  given by $z = \frac{-1+\sqrt{1+a}}{k^{-1/\beta}}$. So, the fixed point
  of density evolution is $y\approx
  \left(\frac{2p}{(1-p)k^{-1/\beta}}\right)^\frac{2\beta}{2-\beta} =
  \left(\frac{2p}{(1-p)k^{-1/\beta}}\right)^{2(1-p)} =
  \left(\textnormal{e}^{-\mu}\left[(1-p)+p\textnormal{e}^{-\alpha(1-\epsilon_1)}\right]\right)^{2(1-p)}>0$.\par
  Due to the presence of a constant fixed point, which does not approach
  $0$ even in the limit of infinite maximum degree, the residual erasure
  rate is always bounded away from $0$. So, the ensemble
  LDGM$\left(n,\lambda(x),\rho^{(N)}(x)\right)$ cannot converge at a
  corner point of the capacity region.


  \balance
  \bibliographystyle{IEEEtran}

\end{document}